\documentclass[12pt]{article}

\usepackage{amsmath, amssymb}
\usepackage{graphicx}
\usepackage{enumerate}
\usepackage{natbib}
\usepackage{overpic}
\usepackage{subfiles}
\usepackage{hyperref}
\usepackage{cleveref}

\newcommand{\blind}{1}

\usepackage{enumitem}
\usepackage{mathrsfs, booktabs, amsthm, amsfonts, bbold}
\usepackage{accents}

\usepackage{xcolor}
\protected\def\[#1\]{\begin{equation}\begin{aligned}#1\end{aligned}\end{equation}}
\protected\def\(#1\){\begin{equation*}\begin{aligned}#1\end{aligned}\end{equation*}}

\newtheorem{theorem}{Theorem}

\theoremstyle{remark}
\newtheorem{remark}{Remark}

\theoremstyle{definition}

\newcommand{\diag}{\operatorname{diag}}

\newcommand{\dom}{\operatorname{dom}}

\newcommand{\dist}{\operatorname{dist}}

\usepackage{subcaption}
\usepackage{float}

\usepackage{comment}

\theoremstyle{definition}

\begin{document}

\def\spacingset#1{\renewcommand{\baselinestretch}
{#1}\small\normalsize} 
\spacingset{1}

\if1\blind
{
  \title{\bf Revisiting Bayesian Variable Selection\\ via Optimization}
  \author{
	Leo L Duan \thanks{\href{email:li.duan@ufl.edu}{li.duan@ufl.edu}}\\
	Department of Statistics, University of Florida    }
    \date{}
  \maketitle
  \vspace{-0.5in}
} \fi

\if0\blind
{
  \bigskip
  \bigskip
  \bigskip
  \begin{center}
    {\LARGE\bf }
\end{center}
  \medskip
} \fi

\bigskip

\begin{abstract}
    Variable selection in linear regression has been a central topic in statistical research for decades. Bayesian variable selection methods, which account for uncertainty in both the regression coefficients and the noise variance, have achieved broad success through the use of discrete or continuous shrinkage priors and efficient collapsed Gibbs samplers. Despite their popularity and strong empirical performance, an enigma remains: the marginal likelihood, obtained by integrating out the regression coefficients and noise variance, is not log-concave; therefore, there is no guarantee of reliably finding its global optimum. In this article, we study this problem from an optimization perspective. Taking the negative log-marginal likelihood as a loss function of the latent precision parameters, we can rewrite it as a difference of convex functions (DC), and then optimize it via a simple iterative algorithm. Under mild compact set conditions, the DC algorithm converges to the global optimum at a linear rate. The positive finding applies to type-II maximum likelihood and extends to maximum marginal posterior under suitable priors, indicating that the problem of mode finding in Bayesian variable selection is much more benign than the lack of log-concavity might suggest. Besides the theoretical insight, the proposed algorithm is easy to implement, free of tuning, and extensible to structured sparsity, and thus can serve as an efficient alternative or warm-start for traditional Markov chain Monte Carlo solutions. The method is illustrated through numerical studies and a spatial data application for quantifying the aftershock risk following the 2019 Ridgecrest earthquakes.

    The source code for the algorithm is publicly available at \url{https://github.com/leoduan/dca_optimization_variable_selection}.
\end{abstract}
\noindent
{\it Keywords: Non-convex optimization, Type-II empirical Bayes, Global-local shrinkage prior, Variable selection} 
\vfill

\addtolength{\textheight}{-0.8in}
\addtolength{\topmargin}{0.5in}

\newpage

\spacingset{1.8} 

\section{Introduction}

Variable selection for linear regression has long held a central place in statistical research, especially when the number of candidate predictors is comparable to or exceeds the sample size. In these scenarios, it is important to distinguish true signals from noise while properly quantifying uncertainty. Bayesian methods have proven particularly appealing, due to the convenient probabilistic framework for simultaneously performing model selection and inference on the non-zero coefficients.

In this article, we focus on the hierarchical model based on normal regression likelihood with conjugate normal prior and inverse-gamma noise variance prior:
\[\label{eq:model}
y \mid \theta,\sigma^2 \sim \text{N}(X\theta,\sigma^2 I_n),\qquad
\theta \mid \sigma^2,D \sim \text{N}(0,\sigma^2 D^{-1}),\qquad
\sigma^2 \sim \operatorname{IG}(a_0,b_0),
\]
where $X\in\mathbb R^{n\times p}$ is the design matrix and
$
D=\diag(d_1,\dots,d_p)\succ 0.$ Typically, one assigns another prior on $D$, leading to a normal-scale mixture prior on $\theta$.

Under this formulation, each coordinate $d_j$ controls the amount of shrinkage applied to $\theta_j$: larger values of $d_j$ correspond to smaller prior variance for $\theta_j$, and therefore stronger shrinkage of $\theta_j$ toward zero.
 The discrete spike-and-slab prior \citep{george1993} can be viewed as a limiting case when $d_j\to \infty$ for those $\theta_j=0$ in the spike, while $d_j<\infty$ for those $\theta_j\neq 0$ in the slab.
Relaxing from the degenerate form, one obtains a continuous shrinkage prior on $\theta$. For example, one may place a two-atom discrete prior so that $1/d_j=\delta_0\approx 0$ for the spike and $1/d_j=\delta_1\gg \delta_0$ for the slab, leading to the continuous spike-and-slab prior \citep{george1993}.
The framework is also naturally related to the recent literature on global--local shrinkage priors \citep{polson_scott_global_local}: for which, one imposes a global scale on $(1/d_j)$ to be small, while allowing heavy tail in the distribution of $(1/d_j)$ so that there is some positive probability for $|\theta_j|$ to be large, hence escaping excessive shrinkage. 
Prominent examples include the horseshoe prior \citep{carvalho2010}, 
the normal-exponential-gamma prior \citep{brown2010inference},
the generalized double Pareto prior \citep{armagan2013},
the Dirichlet--Laplace prior \citep{bhattacharya2015},
the beta-prime prior \citep{bai2021}. Related constructions also connect continuous shrinkage with a two-atom mixture prior, as in the spike-and-slab lasso \citep{rockova2018}.

In the posterior computation under most of the above priors, the normal-scale mixture representation in \eqref{eq:model} is amenable to efficient Markov chain Monte Carlo (MCMC) algorithms, namely, collapsed Gibbs sampling: one integrates out $\theta$ and $\sigma^2$ to obtain a marginal likelihood $\mathcal L(y\mid D)$; with prior $\pi_0(D)$, one could update $D$ in a block or by individual coordinate (and update the other hyperparameters if exist). After collecting the marginal chain of $D$, one then samples the conditional posterior of $(\theta,\sigma^2)$  given $D$. Such a strategy and its variants have been widely adopted, see discussions in \cite{george1993,rajaratnam2015scalable,bhattacharya2016fast} including approximate algorithms \citep{johndrow2020scalable} and extensions to generalized linear models via data augmentation \citep{albert1993bayesian,polson2013bayesian}.

Despite the success of these methods, an important  question remains unanswered: the marginal likelihood $\mathcal L(y\mid D)$ is not log-concave (neither is its product with the prior $\pi_0(D)$ for most of the above priors), so doubts arise on the reliability of finding the global optimum, that is, the maximum marginal likelihood estimate or posterior mode for $D$. 

In this article, we formally revisit this question via the lens of non-convex optimization. We show that the negative log-marginal likelihood admits a useful difference-of-convex structure. Under mild conditions, the global optimum can be attained rapidly by a simple iterative algorithm, regardless of the initial value. Our analysis of the optimization problem shows that the exploration towards the global optimum neighborhood is substantially more benign than the lack of log-concavity alone might suggest, shedding new light on the effectiveness of Bayesian variable selection methods. Besides the theoretical insight, the proposed algorithm is simple and tuning-free, and enjoys great extensibility for inducing structured sparsity \citep{griffin2024structured}.

Before we discuss the focused problem, we want to distinguish our method from three major groups of literature. First, our optimization solution operates on the marginal likelihood, which
can be viewed as a type-II maximum likelihood approach \citep{pena2020restricted}, and is distinct from the early work on optimizing over the regression coefficients \citep{tibshirani1996, fanli2001, zhang2010nearly}. Since the marginal likelihood incorporates the uncertainty of both the regression coefficients and the noise variance, it quantifies the uncertainty of the variable selection decision for statistical inference.
Second, there is a MCMC convergence theory literature supporting the convergence of Gibbs sampling Markov chains to the target posterior, under certain variable selection priors \citep{JohnsonJones2010, RomanHobert2015, Vats2017}. In other words, the literature ensures after a sufficient number of MCMC iterations, a newly drawn Markov chain sample is distributionally indistinguishable from one from the target posterior.
In contrast, our focus here is on the highest mode of the posterior itself, a fundamental aspect of the target posterior. Lastly, our considered marginal likelihood is closely related to the partially integrated likelihood (with $\theta$ integrated out, but not $\sigma^2$) studied in the relevance vector machine literature \citep{tipping2001sparse}, which first discovered an automatic variable selection mechanism when $D$ is maximized over. 
Later, \cite{faul2001analysis} established the uniqueness of coordinate-wise maximizer of $d_j$ given the other $d_{k},k\neq j$. With the above connections, it is important to emphasize that for non-convex problems, coordinate-wise optimality does not ensure global optimality, nor does it guarantee iterative coordinate-wise updates will converge to the global optimum. In that regard, our work fills the critical gap, and the theory and algorithm extend to mode finding of the marginal posterior when certain priors are imposed on $D$. We illustrate the practical performance advantage of our algorithm over existing optimization-based methods in the numerical studies.

\section{Marginal likelihood and difference-of-convex optimization}

\subsection{Variable selection via optimization of marginal likelihood}
In this subsection, we first present the form of the loss function, and carefully reinterpret a few existing results from the literature through the lens of optimization.
Focusing on the model \eqref{eq:model}, after integrating out $\theta$ and $\sigma^2$, we obtain the marginal likelihood dependent on the precision parameters $D$:
\(
\mathcal L(y\mid D)
\propto
|I_n+XD^{-1}X^\top|^{-1/2}
\left(
b_0+\frac12\,y^\top(I_n+XD^{-1}X^\top)^{-1}y
\right)^{-(a_0+\frac n2)}.
\)
One may consider the effect of additional prior on $D$. For now, we focus on the marginal likelihood itself, and apply negative logarithmic transform to obtain a loss function. This leads to the following optimization problem:
\[\label{eq:dc_program}
 \inf_{D\in \mathcal D} 
& -\frac12\log|D|  +\frac12\log|X^\top X +D|\\
& 
+(a_0+\frac n2)\log  \left\{ b_0+\frac12\Bigl(y^\top y- y^\top X ( X^\top X+D)^{-1}X^\top y\Bigr)\right\}
.
\]
In the above, we further impose a space constraint, $\mathcal D \subseteq \mathbb{D}=\{\diag(d_1,\dots,d_p): d_j > 0, j=1,\dots,p\}$ and $\mathcal D$ is convex.

\begin{figure}[h]
  \centering
  \includegraphics[width=\linewidth]{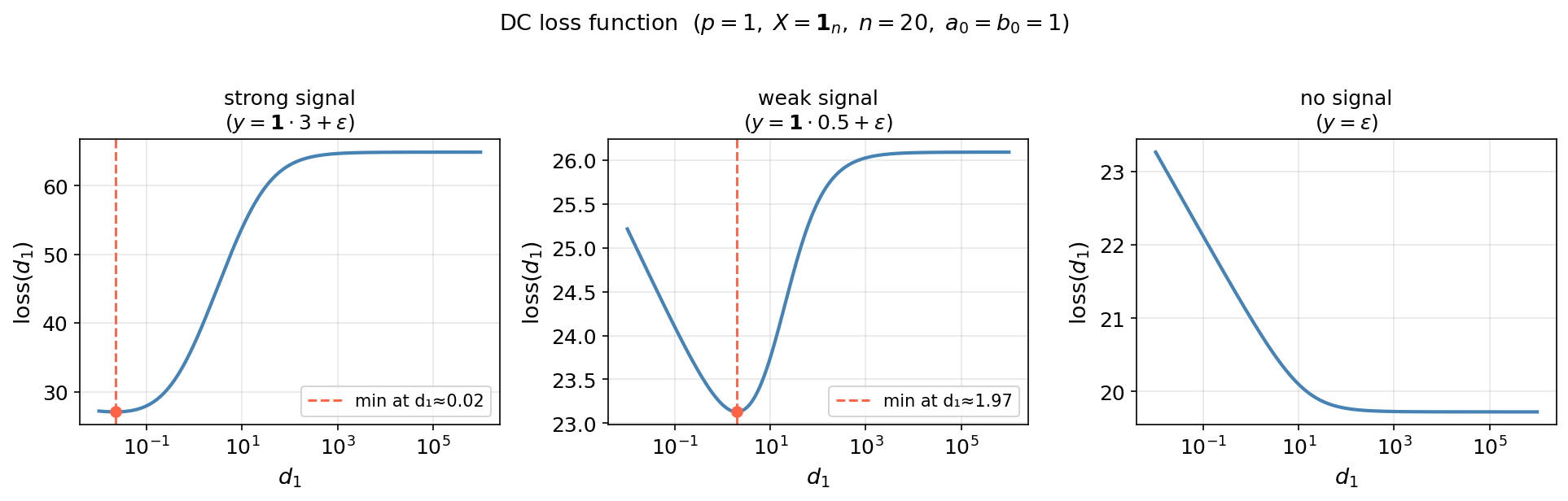}
  \caption{The loss function \eqref{eq:dc_program} for $p=1$, $X=\mathbf{1}_n$, $n=20$, $a_0=b_0=1$,
           under three signal strengths. Red dashed lines mark the interior minimizer. Note under the no-signal case, the optimum is at $d_1\to \infty$, corresponding to shrinking $\theta_1$ to zero.}
  \label{fig:loss1d}
\end{figure}
Maximizing the marginal likelihood over $D$ can be viewed as a type-II maximum likelihood approach \citep{pena2020restricted}.
The marginal likelihood maximized over $D$ has an automatic variable selection mechanism. For illustration, consider a simple setting $p=1$ with $X=\mathbf{1}_n$, the $n$-vector of ones, and $D=\diag(d_1)$. The loss simplifies to
\( -\tfrac{1}{2}\log d_1
  + \tfrac{1}{2}\log(n+d_1)
  + \Bigl(a_0+\tfrac{n}{2}\Bigr)
    \log\left\{
      b_0 + \tfrac{1}{2}\!\left(y^\top y - \frac{(\mathbf{1}^\top y)^2}{n+d_1}\right)
    \right\}, \quad d_1>0.
\)
Figure~\ref{fig:loss1d} plots the loss function for $n=20$, $a_0=b_0=1$, using $y_i=\theta_0+\epsilon_i$ and $\epsilon_i \sim \text{N}(0,1)$ for $i=1,\dots,n$, under three data-generating scenarios: $\theta_0\in\{3,0.5,0\}$ corresponding to strong, moderate, and no signal, respectively.

When the signal is strong or moderate (left and center panels), the loss has a finite interior minimizer $d_1^\star$. The no-signal case (right panel) is qualitatively different: the loss decreases monotonically for all $d_1>0$, with no finite minimizer. This can be seen analytically from the derivative
\(
\frac{\textup{d}\  \text{loss}(d_1)}{\textup{d}\ d_1}
= {-\frac{n}{2\,d_1(n+d_1)}}
  \;+\;
  {\Bigl(a_0+\tfrac{n}{2}\Bigr)
  \frac{ (\mathbf{1}^\top y)^2}{2(n+d_1)^2\!\left[b_0+\tfrac{1}{2}\bigl(y^\top y-\tfrac{(\mathbf{1}^\top y)^2}{n+d_1}\bigr)\right]}},
\)
which is always negative for all $d_1>0$ provided  $(\mathbf{1}^\top y)^2$ is close enough to zero.

Such a diverging behavior of $d_j$ is desirable for variable selection: the prior variance of $\theta_j$ is $(1/d_j)\downarrow 0$, hence the posterior $\theta_j\to 0$ in probability. Similar phenomenon was discovered by \cite{tipping2001sparse} on the partially integrated likelihood $\mathcal L(y\mid D,\sigma^2)$.

 The following theorem gives a sufficient condition for the coordinate-wise diverging behavior for $d_j$ (given the other $d_{k},k\neq j$) in the general $p$-dimensional setting.

\begin{theorem}\label{thm:diverging}
    Let $X=[x_j,X_{-j}]$ and $D=\diag(d_j,D_{-j})$, $\text{RSS}(D)=\bigl(y^\top y-y^\top X(X^\top X+D)^{-1}X^\top y\bigr)$,     $
    P_j(D_{-j})
    =
    I_n-X_{-j}(X_{-j}^\top X_{-j}+D_{-j})^{-1}X_{-j}^\top,
    $
    Then
    \[
    \frac{\partial \text{loss}(D)}{\partial d_j}
    =
    -\frac{x_j^\top P_j(D_{-j})x_j}
    {2\,d_j\bigl(d_j+x_j^\top P_j(D_{-j})x_j\bigr)}
    +
    \frac{a_0+n/2}
    {2\left\{
    b_0+\frac12\text{RSS}(D)
    \right\}}
    \cdot
    \frac{\bigl(x_j^\top P_j(D_{-j})y\bigr)^2}
    {\bigl(d_j+x_j^\top P_j(D_{-j})x_j\bigr)^2},
    \]
    is negative over $d_j>0$ when $$
\bigl(x_j^\top P_j(D_{-j})y\bigr)^2
<
\frac{
\left\{
b_0+\frac12\,y^\top P_j(D_{-j})y
\right\}
\,x_j^\top P_j(D_{-j})x_j}{
(a_0+n/2)
}.    $$
    \end{theorem}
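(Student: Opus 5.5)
The plan has two parts: reduce every $d_j$-dependent piece of the loss to scalar functions of $d_j$ through a Schur-complement / rank-one argument, then bound the resulting derivative.

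\textbf{Step 1: block reduction.} Write $M=X^\top X+D$ in block form with respect to $(j,-j)$. Set $A=X_{-j}^\top X_{-j}+D_{-j}$, which is positive definite and does not depend on $d_j$. Abbreviate $s=x_j^\top P_j(D_{-j})x_j$ and $t=x_j^\top P_j(D_{-j})y$.

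\textbf{Step 2: log-determinant terms.} The determinant formula gives
\[
|M|=|A|\,(x_j^\top x_j+d_j-x_j^\top X_{-j}A^{-1}X_{-j}^\top x_j)=|A|(d_j+s).
\]
Hence
\[
-\tfrac12\log|D|+\tfrac12\log|M| = -\tfrac12\log d_j+\tfrac12\log(d_j+s)+\text{const}.
\]
Its derivative in $d_j$ is
\[
-\frac{1}{2d_j}+\frac{1}{2(d_j+s)}=-\frac{s}{2d_j(d_j+s)},
\]
which is the first term of the claimed derivative.

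\textbf{Step 3: residual sum of squares.} Apply the block inverse (equivalently, regress sequentially: first on $X_{-j}$ with penalty $D_{-j}$, then on the residualized $x_j$). This gives the identity
\[
\text{RSS}(D)=y^\top P_j(D_{-j})y-\frac{t^2}{d_j+s}.
\]
Checking this identity is the main bookkeeping step. The approach is to write $y^\top X M^{-1}X^\top y$ using the partitioned inverse with Schur complement $d_j+s$ and collect terms. The off-diagonal blocks combine into $x_j^\top P_j y$ exactly because $P_j=I-X_{-j}A^{-1}X_{-j}^\top$.

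\textbf{Step 4: derivative of the log term.} From Step 3,
\[
\partial_{d_j}\text{RSS}=\frac{t^2}{(d_j+s)^2}.
\]
Differentiating $(a_0+n/2)\log\{b_0+\tfrac12\text{RSS}\}$ then gives the second term of the displayed formula.

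\textbf{Step 5: sign analysis.} Note that $s\ge 0$, and in fact $s>0$ whenever $x_j\notin$ the column span issues arise. Since $P_j\succ 0$ (its eigenvalues lie in $(0,1]$ because $D_{-j}\succ0$), we get $s>0$ for $x_j\neq0$. The derivative is negative if and only if
\[
(a_0+n/2)\,t^2\, d_j < s\,(d_j+s)\bigl\{b_0+\tfrac12\text{RSS}(D)\bigr\}.
\]
Substitute the expression for RSS from Step 3. The right-hand side becomes
\[
s(d_j+s)\bigl(b_0+\tfrac12 y^\top P_j y\bigr)-\tfrac12 s t^2.
\]
It is therefore enough to show
\[
(a_0+n/2)\,t^2 d_j+\tfrac12 s t^2 < s(d_j+s)\bigl(b_0+\tfrac12 y^\top P_jy\bigr).
\]
Under the hypothesis $(a_0+n/2)t^2<(b_0+\tfrac12 y^\top P_jy)\,s$, the $d_j$-coefficients compare strictly. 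For the constant terms we need $\tfrac12 t^2\le s(b_0+\tfrac12 y^\top P_j y)$. This follows from the hypothesis because $a_0+n/2\ge \tfrac12$ when $n\ge1$. Alternatively, one can use Cauchy–Schwarz in the $P_j$-inner product, $t^2\le s\,y^\top P_j y$.

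Both the $d_j$-linear and the constant parts of the inequality then hold, with strict inequality in the linear part. So the derivative is negative for every $d_j>0$.

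I expect Step 3, the exact RSS decomposition, to be the only real obstacle. Everything else is algebra once that identity is in hand.
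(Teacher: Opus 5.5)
Your proposal is correct and follows essentially the same route as the paper: a Schur-complement reduction to the $d_j$-free scalars $s=x_j^\top P_j(D_{-j})x_j$ and $t=x_j^\top P_j(D_{-j})y$, the identity $\text{RSS}(D)=y^\top P_j(D_{-j})y-t^2/(d_j+s)$, and a sign analysis that rests on $a_0+n/2\ge \tfrac12$. The differences are cosmetic. You get the log-determinant term from $|X^\top X+D|=|X_{-j}^\top X_{-j}+D_{-j}|\,(d_j+s)$ rather than from $[(X^\top X+D)^{-1}]_{jj}$, and where the paper writes the condition as $t^2<R(d_j)$ and shows $R$ decreases to the stated threshold, you compare the $d_j$-coefficient and the constant term of the same linear inequality separately; your Cauchy--Schwarz remark even shows the constant term never binds.
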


    \begin{remark}
    From the above, we can see immediately two important points. First, even though the coordinate-wise minimizer could be obtained at finite value, as shown in Figure~\ref{fig:loss1d}, the loss function is not convex in $D$. Second, the coordinate-wise diverging behavior means that when the true data generating model corresponds to some $\theta_j=0$, and if such a sparsity pattern is recoverable by the optimizer of the loss function, then  the corresponding $d_j$ will diverge to infinity.
    \end{remark}

    In the remaining of the article, we will impose constraints on $\mathcal D$ to be closed and bounded. In canonical Bayesian variable selection, the boundedness is often not needed, as the prior has a probabilistic regularization effect against infinite values; nevertheless, for optimization, the boundedness serves two important roles: (i) it ensures the optimum is always attainable at finite value; (ii) it allows us to impose useful structured sparsity assumptions, to be discussed in a later section. As a result, if we have a monotone loss function along $j$-th coordinate, we will obtain a finite $d_j$ on the boundary of $\mathcal D$.

\subsection{Difference-of-convex algorithm}
We now first rewrite the loss function as the difference of two convex functions over $\mathbb{D}$:
\[
& g(D)=-\frac12\log|D|+ I_{\mathcal D}(D)
=-\frac12\sum_{j=1}^p \log d_j + I_{\mathcal D}(D),\\
& h(D)
=
-\frac12\log| X^\top X+D|
-(a_0+\frac n2)\log  \left\{ b_0+\frac12\Bigl(y^\top y- y^\top X (X^\top X+D)^{-1}X^\top y \Bigr)\right\},
\]
where $I_{\mathcal D}(D)$ is the indicator function of $\mathcal D$, $I_{\mathcal D}(D)=0$ if $D \in \mathcal D$ and $+\infty$ otherwise.
The convexity of $g(D)$ is obvious, and the convexity of $h(D)$ is due to the fact that $-y^\top X (X^\top X+D)^{-1} X^\top y$ is concave, and log is increasing and concave.

Based on the above difference-of-convex structure, we employ the difference of convex functions (DC) algorithm.
Starting from some $D^{(0)} \in \mathcal{D}$, at iteration $t$, DC algorithm replaces $h$ by its first-order affine approximation at $D^{(t)}$, and computes
\[\label{eq:dca_subproblem}
D^{(t+1)} \leftarrow \underset{D \in \mathbb{D}}{\arg\min} \left\{ g(D) -h(D^{(t)})- \langle \nabla h(D^{(t)}), D - D^{(t)} \rangle \right\}
\]
until convergence.
Since $g$ is convex, $h(D^{(t)})$ and its gradient are considered as given in the subproblem; thus, \eqref{eq:dca_subproblem} is convex hence the update can be obtained efficiently. For our specific problem, since $g(D)$ is a simple log function subject to $\mathcal D$ constraint, we have:
\[\label{eq:dca_update_general}
    d_j^{(t+1)}
    \leftarrow
    P_{\mathcal D} \!\left( \diag\{
    -\frac{1}{2[\nabla h(D^{(t)})]_{jj}} \}_{j=1}^p \right),
\]
where $P_{\mathcal D}(x)$ is the projection onto $\mathcal D$, and the gradient of $h(D)$ is given by
$$
   [\nabla h(D^{(t)})]_{jj}
    =
    -\frac12\Big[(X^\top X+D^{(t)})^{-1}\Big]_{jj}
    -\frac{(a_0+n/2) \left(\Big[(X^\top X+D^{(t)})^{-1}X^\top y\Big]_j\right)^2}{2\left\{\,b_0+\frac12\Big(y^\top y-y^\top X (X^\top X+D^{(t)})^{-1}X^\top y\Big)\right\}}
    .
$$
For high-dimensional problems with $p\gg n$, one may apply Woodbury formula to reduce the cost $(X^\top X+D^{(t)})^{-1}=
(D^{(t)})^{-1} - (D^{(t)})^{-1}X^\top \Bigl(I_n+X(D^{(t)})^{-1}X^\top\Bigr)^{-1} X(D^{(t)})^{-1}$. Therefore, the gradient can be computed with $O(\min(n,p)^3)$ complexity.  

To be more concrete, for simple variable selection tasks, we will use simple box constraint $\mathcal D$: $0\le a\le d_j\le b$ for all $j=1,\dots,p$, which leads to the following closed-form update:
\[\label{eq:dca_update}
    d_j^{(t+1)}
    \leftarrow
    P_{[a,b]}\!\left(
    -\frac{1}{2[\nabla h(D^{(t)})]_{jj}}
    \right),
    \qquad j=1,\dots,p,
\]
where 
$P_{[a, b]}(x) = \min\{b, \max\{a, x\}\}$ for $x\in \mathbb{R}$
is the projection onto the interval $[a, b]$.   We see that the algorithm \eqref{eq:dca_update_general} is simple and tuning-free, yet it accounts for the uncertainty of $\theta$ and $\sigma^2$ due to its operation on the marginal likelihood.

In the following two subsections, we show that the above algorithm not only monotonically decreases the loss function $g(D)-h(D)$, hence leading to convergence, but also enjoys global-optimum convergence guarantee under mild conditions.

\subsection{Convergence and acceleration}\label{sec:acceleration}
We first show that the DC algorithm (under a general convex set $\mathcal D$) using update rule \eqref{eq:dca_subproblem} is guaranteed to converge. The argument follows standard optimization theory techniques, but is informative to the readers unfamiliar with the topic.

Since \eqref{eq:dca_subproblem} is convex, by first-order optimality condition, we have
\(
\nabla h(D^{(t)})\in \partial g(D^{(t+1)}),
\)
where $\partial g(x)$ is the subdifferential of $g$ at $x$: the collection of all subgradients $\{v_x \in \mathbb{D}: g(y) \ge g(x)+ \langle v_x, y-x\rangle \}$ for all $y\in \mathbb{D}$. The subdifferential $\partial g(x)$ reduces to a single element for gradient when $g$ is differentiable at $x$.

Using the definition of subdifferential, we have
\(
g(D^{(t)}) \ge g(D^{(t+1)})+ \langle \nabla h(D^{(t)}), D^{(t)}-D^{(t+1)}\rangle.
\)
Since $h$ is convex and differentiable, we have $h(D^{(t+1)})\ge h(D^{(t)})+\langle \nabla h(D^{(t)}),D^{(t+1)}-D^{(t)}\rangle.$ Adding the above two inequalities, we obtain
\(
g(D^{(t)})-h(D^{(t)})\ge g(D^{(t+1)})-h(D^{(t+1)}).
\)
This shows that the DC algorithm is a descent algorithm: the loss function $g(D)-h(D)$ is monotonically non-increasing. As $t\to \infty$, provided the loss function is bounded below (such as when $d_j\ge a>0$ for all $j=1,\dots,p$), the loss function values converge.

Next, we know the algorithm converges at the fixed point of \eqref{eq:dca_subproblem}, that is when $D^{(t)}=\underset{D \in \mathbb{D}}{\arg\min} \left\{ g(D) - \langle \nabla h(D^{(t)}), D  \rangle \right\}$ for $t$ large enough. Denoting such $D^{(t)}$ as $D^\dagger$, and applying the differential rule of indicator function, we know the fixed point corresponds to
\(
0\in \nabla \tilde g(D^\dagger)+N_{\mathcal D}(D^\dagger)-\nabla h(D^{(t)}),
\)
where $\tilde g(D) = -\frac{1}{2}\sum_{j=1}^p \log d_j$, and $N_{\mathcal D}(D^\dagger)
:=
\Bigl\{
U:\ \langle U,\; D-D^\dagger\rangle \le 0
\quad \forall D\in\mathcal D
\Bigr\}$ is the normal cone.  When  $D^\dagger$ is in the interior of $\mathcal D$, $N_{\mathcal D}(D^\dagger)=\{0\}$, the above condition reduces to $\nabla \tilde g(D^\dagger)= \nabla h(D^\dagger)$; nonetheless, the general normal cone representation allows $D^\dagger$ to be on the boundary of $\mathcal D$.

For tracking and diagnosing the convergence of $D$, we can use the residual distance:
\(
    \lambda(D)=\text{dist}[\nabla h(D) - \nabla \tilde g(D), N_{\mathcal D}(D)]
\)
and see how far is from zero. The notation $\text{dist}(x,C)$ here means the shortest Euclidean distance between $x$ and its closest point in $C$.

For the box constraint, the normal cone is given by
$
    N_{\mathcal D}(D)
    =
    \Bigl\{
    U=\diag(u_1,\dots,u_p):
    \ u_j=0 \text{ if } a<d_j<b,\;
    u_j\le 0 \text{ if } d_j=a,\;
    u_j\ge 0 \text{ if } d_j=b
    \Bigr\}.    
$
Therefore, we have the residual distance squared as:
\(
\lambda^2(D)= \sum_{j=1}^p \lambda_j^2(D), \qquad
\lambda_j(D)=
\begin{cases}
s(D), & a<d_j<b,\\[0.4em]
\max\{ [\nabla h(D) - \nabla \tilde g(D)]_{jj},\,0\}, & d_j=a,\\[0.4em]
\max\{-[\nabla h(D) - \nabla \tilde g(D)]_{jj},\,0\}, & d_j=b.
\end{cases}
\)
Therefore, a potential acceleration to the convergence is boundary exploration: at iteration $t$, if $d^{(t)}_j\le \tau_1$, then we may propose $d^{(prop)}_j=a$; similarly, if $d^{(t)}_j\ge \tau_2$, then we may propose $d^{(prop)}_j=b$. Then we evaluate if $g(D^{(prop)})-h(D^{(prop)})<g(D^{(t)})-h(D^{(t)})$. If so, we accept the proposal and set $D^{(t+1)}=D^{(prop)}$.

In practice, since we use a low $a=10^{-4}$, we do not observe any optimal $d_j$ converging to $a$, but we do frequently see convergence to $b=10^4$. Therefore, we set $\tau_1 = a$ but $\tau_2 = 0.1 b$. To prevent sudden large changes in $D$ that causes increase of loss function, we allow at most 20 coordinates $d_j$ to be updated to the boundary in any given iteration.

\subsection{Global optimality}

Next, it is natural to question whether the converged value is the global minimum. We now establish a set of sufficient conditions.

For ease of notation, we denote the loss function $f(D) = \tilde{g}(D) - h(D)$
with $\tilde{g}(D) = -\frac{1}{2}\sum_{j=1}^p \log d_j$
, the set of global minimizers $
\mathcal S:=\arg\min_{D\in\mathcal D} f(D)$, any global optimizer $D^\star$, and the minimum value by $
f^*:=\min_{D\in\mathcal D} f(D)$. Since $f$ is differentiable, the optimal $D^\star$ has the following necessary  condition:
\(
\text{(Stationarity) }\;\exists U^\star \in N_{\mathcal D}(D^\star) \text{ such that } \nabla f(D^\star)+U^\star=0.
\)
We refer to $U\in N_{\mathcal D}(D): U= -\nabla f(D)$ as the {\em normal-cone multiplier} for $D$, based on the origin of $U$ from Karush–Kuhn–Tucker conditions. Despite the necessity, the above stationarity condition is not sufficient: the existence of normal-cone multiplier $U$ for $D$ cannot guarantee the optimality of $D$, since $f$ is not convex.

On the other hand, for those non-optimal points satisfying stationarity $\bar{\mathcal S} := \{D \in \mathcal D : U \in N_{\mathcal D}(D),\; \nabla f(D) + U = 0,\; f(D) > f^\star\}$. If these non-optimal points in $\bar{\mathcal S}$ all have $\|\nabla f(D)\|_F > r$ by some $r>0$, then the DC algorithm can be robust to such points and converges to the global minimum.

\begin{theorem}\label{thm:global_nonunique}
    Let $\mathcal D$ be compact, closed, and convex subset of     $\mathbb{D} = \{\diag(d_1,\dots,d_p) : d_j > a\}$, with $a>0$. 
    Let $r>0$ be a constant such that
    \(
    \|\nabla \tilde g(D^{(t+1)})-\nabla h(D^{(t)}) \|_F\le r
    \qquad\text{for all }t \ge t_0,
    \)

    Suppose for $\mathcal D$ and optimal set $\mathcal S$, there exist constants $M>0$, $m>0$, $\rho>0$:
    \begin{enumerate}[label=(A\arabic*)]
        \item For every $D\in \mathcal D:\dist(D,\mathcal S)\le \rho$,
        $f(D)-f^* \le M\,\dist(D,\mathcal S)^2.$
        \item For every $D\in\mathcal D:\dist(D,\mathcal S)\le \rho$, and for every
        $U\in N_{\mathcal D}(D):\|U\|_F\le r$,
        $
        \|\nabla f(D) +U\|_F
        \ge
        m\,\dist(D,\mathcal S).
        $
        \item   
        If $D\in\mathcal D$ and $U\in N_{\mathcal D}(D):\|U\|_F\le r$ satisfy
        $
        \nabla f(D) +U=0,
        $
        then $D\in\mathcal S$.
    \end{enumerate}
    Then there exists a constant $\mu>0$ such that the DC algorithm iterates \eqref{eq:dca_subproblem} satisfy
    \(
    f(D^{(t+1)})-f^\star
    \le
    \frac{1}{1+\mu}\bigl(f(D^{(t)})-f^\star\bigr),
    \) 
    for all $t \ge t_0$.
  \end{theorem}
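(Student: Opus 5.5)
Our approach is the standard "sufficient decrease plus error bound" argument for DC algorithms, specialized to the strongly convex term $\tilde g$ on the compact set $\mathcal D$.

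\emph{Step 1: sufficient decrease.} On $\mathcal D\subset\{d_j\ge a\}$ and bounded, say $d_j\le B$ for all $D\in\mathcal D$, the function $\tilde g$ is strongly convex with modulus $\kappa=1/(2B^2)$, since $\partial^2\tilde g/\partial d_j^2=1/(2d_j^2)$. Strong convexity sharpens the subgradient inequality used in the descent argument of the previous subsection. Take $V^{(t+1)}=\nabla h(D^{(t)})-\nabla\tilde g(D^{(t+1)})$, which lies in $N_{\mathcal D}(D^{(t+1)})$ by the optimality condition of \eqref{eq:dca_subproblem}. Combining with convexity of $h$ gives
\[
f(D^{(t)})-f(D^{(t+1)})\ \ge\ \tfrac{\kappa}{2}\|D^{(t+1)}-D^{(t)}\|_F^2 .
\]

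\emph{Step 2: relative error.} We express a stationarity residual at $D^{(t+1)}$ through the step length. By construction,
\[
\nabla f(D^{(t+1)})+V^{(t+1)}=\nabla h(D^{(t)})-\nabla h(D^{(t+1)}),
\]
and $\|V^{(t+1)}\|_F\le r$ for $t\ge t_0$ by the hypothesis on $r$. The Hessian of $h$ is continuous on the compact set $\mathcal D$, because $X^\top X+D\succeq aI$ and $b_0+\tfrac12\text{RSS}(D)\ge b_0>0$. Hence $\nabla h$ is $L$-Lipschitz there, and
\[
\|\nabla f(D^{(t+1)})+V^{(t+1)}\|_F\le L\|D^{(t)}-D^{(t+1)}\|_F .
\]

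\emph{Step 3: error bound to linear rate.} When $\dist(D^{(t+1)},\mathcal S)\le\rho$, we combine (A2) and (A1):
\[
f(D^{(t+1)})-f^\star\le M\dist(D^{(t+1)},\mathcal S)^2\le \frac{M L^2}{m^2}\|D^{(t)}-D^{(t+1)}\|_F^2\le \frac{2ML^2}{\kappa m^2}\bigl(f(D^{(t)})-f(D^{(t+1)})\bigr).
\]
Set $\mu=\kappa m^2/(2ML^2)$. The inequality reads $\mu(f(D^{(t+1)})-f^\star)\le (f(D^{(t)})-f^\star)-(f(D^{(t+1)})-f^\star)$, which rearranges to the claimed contraction.

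\emph{Step 4: far from $\mathcal S$ (main obstacle).} The hard case is $\dist(D^{(t+1)},\mathcal S)>\rho$, where (A1) and (A2) are unavailable. Here we would use (A3) with compactness. Consider the set $K=\{(D,U):D\in\mathcal D,\ U\in N_{\mathcal D}(D),\ \|U\|_F\le r,\ \dist(D,\mathcal S)\ge\rho\}$. It is compact, because the graph of the normal cone map is closed and $f$ is $C^1$ on a neighborhood of $\mathcal D$. By (A3), the continuous function $\|\nabla f(D)+U\|_F$ has no zero on $K$, so it attains a minimum $\delta>0$. Step 2 then gives $\|D^{(t)}-D^{(t+1)}\|_F\ge\delta/L$. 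The gap $f(D^{(t+1)})-f^\star$ is bounded by $\Delta:=\max_{\mathcal D}f-f^\star<\infty$. Therefore
\[
f(D^{(t+1)})-f^\star\le \frac{2\Delta L^2}{\kappa\delta^2}\bigl(f(D^{(t)})-f(D^{(t+1)})\bigr).
\]
Taking $\mu$ as the minimum of the two constants finishes the proof.

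The delicate points are (i) verifying closedness of $K$, using upper semicontinuity of $N_{\mathcal D}$ and compactness of $\mathcal S$ (closed subset of compact $\mathcal D$, by continuity of $f$), and (ii) ensuring $\nabla h$ is Lipschitz on the relevant set. If the strict inequality $d_j>a$ makes $\mathcal D$ fail to be closed, we would instead work on its closure, which still lies in $\{d_j\ge a>0\}$, so all constants persist.
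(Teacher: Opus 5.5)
Your proof is correct. Its outer structure matches the paper's: you split on whether $\dist(D^{(t+1)},\mathcal S)\le\rho$, using (A1)--(A2) in that case, and otherwise use (A3) plus compactness of $\{(D,U): U\in N_{\mathcal D}(D),\ \|U\|_F\le r,\ \dist(D,\mathcal S)\ge\rho\}$. The genuine difference is how you link the residual $\nabla f(D^{(t+1)})+V^{(t+1)}=\nabla h(D^{(t)})-\nabla h(D^{(t+1)})$ to the decrease of $f$. Both arguments rest on the exact identity
\begin{equation*}
f(D^{(t)})-f(D^{(t+1)}) = d_h(D^{(t+1)},D^{(t)}) + d_{\tilde g}(D^{(t)},D^{(t+1)}) - \langle V^{(t+1)},D^{(t)}-D^{(t+1)}\rangle ,
\end{equation*}
in which all three terms are nonnegative.

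The paper discards the $\tilde g$-Bregman term. It bounds the $h$-Bregman term below by $\frac{1}{2L_h}\|\nabla h(D^{(t)})-\nabla h(D^{(t+1)})\|_F^2$, using the conjugacy identity $d_h(D^{(t+1)},D^{(t)})=d_{h^*}(\nabla h(D^{(t)}),\nabla h(D^{(t+1)}))$ and strong convexity of $h^*$. This controls the squared residual directly. You instead discard the $h$ term and bound $d_{\tilde g}$ below by $\frac{\kappa}{2}\|D^{(t+1)}-D^{(t)}\|_F^2$, using strong convexity of $\tilde g$ on the bounded set. You then pass from step length to residual through the Lipschitz constant of $\nabla h$.

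Each route has a cost. The paper's $\mu=\min\{m^2/(2L_hM),\,\delta_\Xi^2/(2L_hG_\Xi)\}$ does not involve the upper bound $B$ of $\mathcal D$. Yours is smaller by the factor $\kappa/L$ (taking $L=L_h$), and $\kappa=1/(2B^2)$ is minuscule for the box $[10^{-4},10^{4}]$ used in the experiments. So you get the same qualitative linear rate with a much weaker constant. In exchange, every inequality you use holds on $\mathcal D$ itself. The paper's step through strong convexity of $h^*$ strictly needs $L_h$-smoothness of $h$ beyond $\mathcal D$. That does hold here, since $h$ stays smooth and convex on $\{d_j\ge a/2\}$, and $L_h$ can be enlarged so the relevant gradient step stays in that set, but it is an extra step the paper does not spell out.
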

    \begin{remark}   
      We provide the details of $\mu$ in the proof of the theorem.
        The conditions (A1) and (A2) describe the local geometry of the problem around
        the solution set $\mathcal S=\arg\min_{D\in\mathcal D} f(D)$.
        In particular, (A1) is a local quadratic growth condition, stating that the
         gap $f(D)-f^\star$ is controlled by the squared distance from $D$
        to $\mathcal S$, while (A2) is a residual error-bound condition, asserting
        that the residual
        $
        \nabla f(D) +U
        $
        grows at least linearly with $\dist(D,\mathcal S)$.
        Condition (A3) states that all non-optimal stationary points in
        $\mathcal D$ have normal-cone multiplier norm above $r$, as we discussed before the theorem. To be rigorous, we do not claim a uniform convergence rate since $r$ is dependent on $t_0$: $t_0$ can be interpreted as the number of warm-up iterations for getting $D^{(t)}$ into a reasonable set so that $\|\nabla \tilde g(D^{(t+1)})-\nabla h(D^{(t)}) \|_F$ is not too large going forward.
    \end{remark}

    Examining the conditions (A3), we can see that the theorem effectively rules out those pathological problems with non-optimal $D$ in the $\mathcal D$ interior and $\nabla f(D)=0$. This exclusion is not surprising for the DC algorithm, as we can see from the update rule \eqref{eq:dca_subproblem} --- the algorithm would converge at $D^{(t)}$ if $\nabla \tilde g(D^{(t)})- \nabla h(D^{(t)})=0$, hence could be stuck at such a non-optimal point if it exists.
    
    Fortunately, by Theorem \ref{thm:diverging} and the remark that follows, we know if a non-optimal stationary points correspond to an alternative choice of sparsity patterns (with at least one $\theta_j\approx 0$), then the corresponding $D$ will be at the boundary of $\mathcal D$, hence $\|\nabla f(D)\|_F>0$ and can be excluded from the DC algorithm.

    The above theorem and its proof can be readily extended to the setting where one optimizes the marginal posterior of $D$ instead of the marginal likelihood, provided that the added prior term $-\sum_{j=1}^p \log \pi_0(d_j)$ can be expressed as a difference of convex functions, and assumptions (A1), (A2), and (A3) are satisfied. 
    
    On the other hand, it is important to note that the posterior mode and mean can exhibit markedly different behaviors in terms of variable selection. Consider, for example, the horseshoe prior, where $\sqrt{1/d_j} \sim \text{Cauchy}_+(0, \tau)$ \citep{carvalho2010}. Accounting for the Jacobian, the negative log-prior takes the form
    $
        -\log \pi_0(d_j)
        =
        \frac{1}{2} \log d_j
        + \log(1+\tau^2 d_j)
        + \text{const},
    $
    where the first term, when summed over $j$, completely cancels $\tilde g(D)$, resulting in a concave loss function over the entire domain $\mathbb{D}$. As a consequence, when a box constraint is imposed on $D$, each optimal $d_j^{\star}$ is forced to either the lower or upper boundary ($a$ or $b$). Empirically, we found that this binary, coordinate-wise extremal form of the posterior mode performs poorly for signal recovery. In contrast, the posterior mean under the horseshoe prior is well known for its strong ability to discriminate between signal and noise.

    \subsection{Inducing structured sparsity via additional constraints}\label{sec:structured}

    We now discuss useful extensions for inducing \emph{structured sparsity}. Structured sparsity means that the zero pattern is not arbitrary, but is instead constrained to respect known structure in the features, such as groups, overlaps, or an ordering across indices. In our framework, this is encoded by requiring the vector of sparsity parameters $(d_1,\dots,d_p)$ to belong to a closed convex set $\mathcal C\cap [a,b]^p$, so that those $j:d_j \ge \tau_2$ (with $\tau_2$ large enough to correspond to near-zero $\theta_j$) adhere to structured sparsity patterns.
    
    Different structural assumptions lead to different choices of $\mathcal C$ and, consequently, different projection algorithms. Regardless, if the projection lacks a closed-form solution, one could use the alternating projection algorithm. For $\mathcal C\cap [a,b]^p=\cap_{k=1}^{K} \mathcal C_k \cap [a,b]^p$, one could start from a value $\tilde d: \tilde d_j= -([2\nabla h(D^{(t)})]_{jj})^{-1}$ for each $j$, and then iterate the following update until convergence of $\tilde d$:
    \(
    \tilde d \leftarrow P_{[a,b]^p}\circ P_{\mathcal C_1}\circ \cdots \circ P_{\mathcal C_K} (\tilde d), \quad k=1,\dots,K.
    \)
    The alternating projection algorithm converges rapidly since the target set is an intersection of convex sets.
    
    We now provide a few useful examples of structured sparsity constraints. First, for a non-overlapping group structure, let $\{G_1,\dots,G_M\}$ be a partition of $\{1,\dots,p\}$. A natural constraint is to require all coordinates within the same group to be simultaneously close to zero, if the corresponding features need to be excluded at the same time. This can be encoded as the following constraint:
    \(
    \mathcal C =\{ d: d_i=d_j,\quad \forall i,j\in G_m,\quad m=1,\dots,M \}.
    \)
   The set is a linear subspace, and the Euclidean projection is obtained by averaging within each group:
    \(
    (P_{\mathcal C}(\tilde d))_i
    =
    \frac{1}{|G_m|}\sum_{k\in G_m} \tilde d_k,
    \qquad i\in G_m.
    \)
    
    For overlapping groups, let $\mathcal G=\{G_1,\dots,G_M\}$ be a collection of possibly overlapping subsets of $\{1,\dots,p\}$. 
    For overlapping groups, we first decompose the index set into the smallest disjoint regions $(\mathcal R_1,\dots,\mathcal R_L)$ induced by the overlaps of $G_1,\dots,G_M$. Each region consists of coordinates that belong to exactly the same combination of groups, and we impose that $d_i$ is constant within each region. We then require regions in deeper intersections to have equal or stronger shrinkage than regions in shallower ones. In this way, coordinates that lie in multiple overlapping groups are shrunk more strongly than those that lie in only one group. The projection is obtained by first averaging within each region and then solving the resulting isotonic regression problem over these regions. The projection is therefore given by
\(
& P_{\mathcal C}(\tilde d)
=
\arg\min_z \frac12\sum_{i=1}^p (z_i-\tilde d_i)^2
\\ 
& \text{subject to}
\quad
\begin{cases}
d_i=d_j, & \text{if } i,j \text{ belong to the same region $R_k$},\\
d_j\le d_i, & \text{if } j \text{ lies in a strictly deeper overlap region than } i.
\end{cases}
\)
The solution can be efficiently solved by isotonic regression algorithms \citep{dykstra1981,best1990}.

Lastly, we discuss the case where there is a monotonicity of shrinkage strength over multi-index. Suppose the parameters are indexed by $u\in\mathcal I$, where $u$ may be a two-dimensional or higher-dimensional index. We can impose a partial-order monotonicity constraint of the form
    $
    d_u \le d_v
     \text{if } u \preceq v,
    $
    where $\preceq$ is a prescribed partial order reflecting the desired structure. Examples include coordinatewise monotonicity, or radial monotonicity in which $d_u$ is required to decrease as $u$ moves farther away from some reference origin. The projection is then
    \(
    P_{\mathcal C}(\tilde d)
    =
    \arg\min_{z}
    \frac{1}{2}\sum_{u\in\mathcal I}(z_u-\tilde d_u)^2
    \quad
    \text{subject to }
    z_u\le z_v, \text{ if } u\preceq v.
    \)
    This is a generalized isotonic regression problem on a partially ordered set.
    
    Since the last projection will be used in our spatial data application, we now discuss the case where the $u\preceq v$ order is induced by a scalar score $s(u)$ (for example, distance to the origin).
    The projection reduces to weighted isotonic regression after averaging over ties in $s(u)$. Specifically, if
    $
    S_t=\{u\in\mathcal I : s(u)=t\}
    $
    denotes the level set associated with score $t$, and
    $
    \bar d_t={|S_t|^{-1}}\sum_{u\in S_t} \tilde d_u,
    $
    then the projection is obtained by solving
    $
    \hat g
    =
    \arg\min_{g_1\ge g_2\ge\cdots}
   \sum_t |S_t|(g_t-\bar d_t)^2,
    $
    and updating $\tilde d_u \leftarrow \hat g_t$ for all $u\in S_t$. In this case, one can apply the weighted pool-adjacent-violators algorithm (PAVA) to solve the projection, and lastly, project to $[a,b]^p$ in the last step. The projection algorithm is non-iterative, as it takes only one pass of averaging, PAVA, and box-truncation, hence is efficient.

\section{Numerical experiments}
In this section, we conduct numerical experiments using both synthetic and real data to illustrate the performance of the proposed DC algorithm.

\subsection{Simulation}

We compare the proposed difference-of-convex (DC) algorithm against two alternative optimization algorithms: the EM algorithm for automatic relevance determination (ARD) \cite{tipping2001sparse}, and projected gradient descent (PGD) for the loss function \eqref{eq:dc_program}.
All three algorithms use the same box constraint $\mathcal{D} = \{D : 10^{-4} \le d_j \le 10^4,\,j=1,\dots,p\}$ with hyperparameters $a_0=b_0=1$, and are initialized at $D^{(0)}=I_p$.
Each algorithm is considered converged when the relative change in the loss function (expected loss function in EM algorithm) satisfies $|f(D^{(t+1)})-f(D^{(t)})|/|f(D^{(t)})| < 10^{-5}$ for two consecutive iterations.

\begin{figure}[H]
  \centering
  \includegraphics[width=\linewidth]{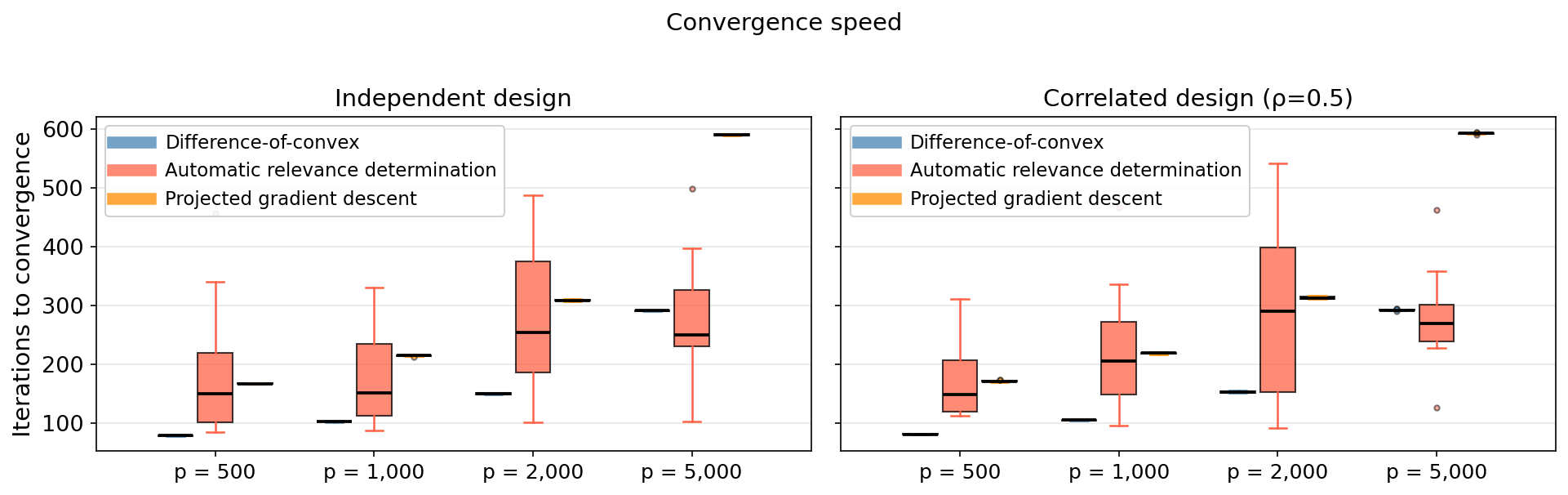}
  \caption{Iterations to convergence for the difference-of-convex algorithm (DC), automatic relevance determination (ARD), and projected gradient descent (PGD) across four dimensionality settings under independent and correlated designs ($\rho=0.5$). Each boxplot summarizes 10 independent replications. The numbers for both difference-of-convex and projected gradient descent have low variance, hence the boxes are very narrow.}
  \label{fig:iters}
\end{figure}

We consider four dimensionality settings, $p\in\{500,1000,2000,5000\}$, each paired with two design structures: an {independent} design ($X_{ij}\overset{\mathrm{iid}}{\sim}\mathrm{N}(0,1)$) and a {correlated} design generated from a Toeplitz covariance with $\mathrm{Cov}(X_{ij},X_{ik})=0.5^{|j-k|}$.
All columns of $X$ are standardized to unit variance.
In every scenario the sample size is $n=200$, the number of true signals is $s=5$, their nonzero coefficients are drawn uniformly from $\{-3,-2.5,2.5,3\}$, and the noise level is $\sigma^2=1$.
Each scenario is replicated 10 times with independent random seeds.
Variable selection is based on the posterior mean estimator $\hat\theta=(X^\top X+D)^{-1}X^\top y$ evaluated at the converged $\hat D$: predictor $j$ is selected if $|\hat\theta_j|>0.1$.

All three algorithms achieve perfect variable selection (true positive rate $= 1$, false positive rate $= 0$) across every scenario and replication, confirming the theoretical global-optimality guarantees for DC.

\begin{figure}[H]
  \centering
  \includegraphics[width=\linewidth]{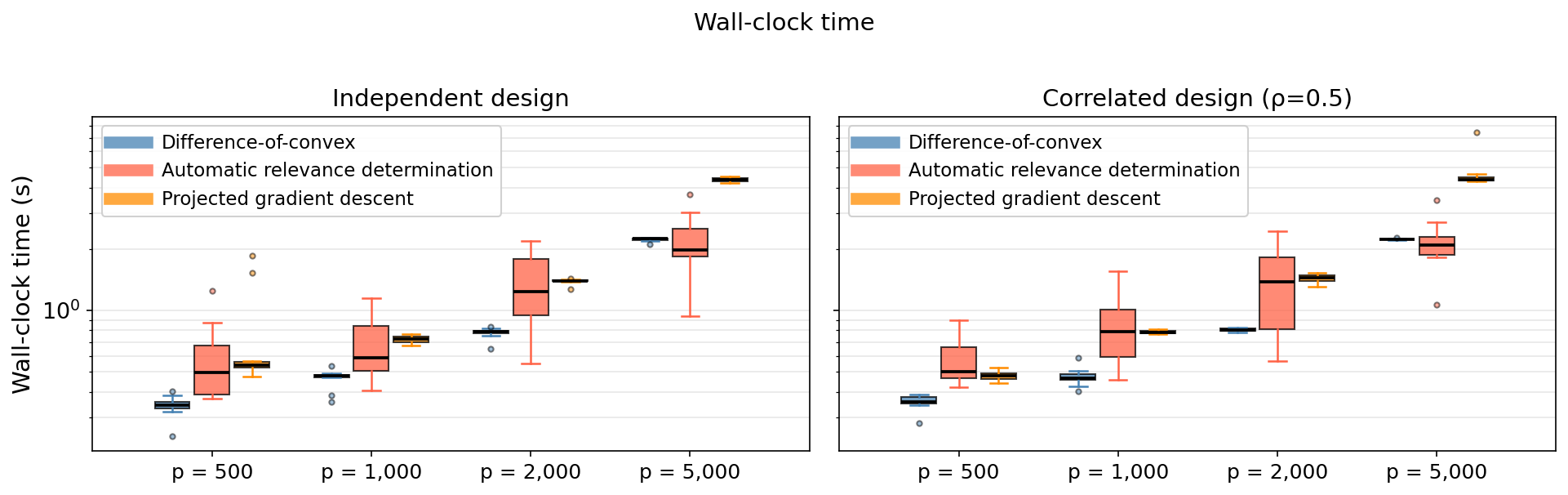}
  \caption{Wall-clock time (seconds, log scale) for the same three algorithms and scenarios as Figure~\ref{fig:iters}.}
  \label{fig:walltime}
\end{figure}

The algorithms differ markedly in convergence speed, as summarized in Figures~\ref{fig:iters} and \ref{fig:walltime}.
Difference-of-convex requires the fewest iterations: roughly 79--103 for $p\le 1000$ and up to 292 for $p=5000$, with negligible across-replication variance.
Automatic relevance determination converges in a comparable number of iterations on average but exhibits substantially higher variability (standard deviation up to $\pm140$ iterations), reflecting the sensitivity of the coordinate-wise EM step size to the local curvature.
Projected gradient descent with the log-space scaled step size converges in 167--593 iterations, monotonically increasing with $p$, and with very low replication-to-replication variance. On average, the projected gradient descent takes more iterations than the other two algorithms, due to the non-adaptiveness of the chosen step size. We also experiment with adaptive step size for projected gradient descent with backtracking line search, the algorithm takes considerably more time in each iteration, hence we do not include the results here. In terms of wall-clock time the difference-of-convex algorithm is the fastest, completing all scenarios in under 2.3 seconds per replicate on a 6-core CPU of a Macbook Pro. Automatic relevance determination is slightly slower but comparable, while projected gradient descent takes the longest time.

\subsection{Scalability to high-dimensional data}
We apply the DC algorithm to the Dorothea dataset \citep{guyon2004result}, a well-known benchmark dataset for variable selection due to its high dimensionality.
The data represent a drug-discovery task: each sample is a chemical compound and a
response score $y_i$ indicates the effectiveness of the compound against a
target protein. We treat $y_i$ as continuous in this application.
Each compound is described by $100{,}000$ binary structural features;
after removing the 8{,}402 features that are identically zero across the combined
training and validation sets, $p = 91{,}598$ features remain.
We combine the training and validation sets, giving $n=1{,}150$ observations.
Following standard practice for the linear model, both $X$ and $y$ are
standardized to zero mean and unit variance column-wise.

We run the DC algorithm with box constraint $\mathcal D =
\{D : 10^{-4}\le d_j \le 10^4,\, j=1,\dots,p\}$,
hyperparameters $a_0=b_0=1$, and warm-start $d_j^{(0)} = P_{[a,b]}(1/\hat\theta_{j,\text{MLS}}^{2})$
from marginal least-squares estimates $\hat\theta_{j,\text{MLS}}= X_j^\top y/n$.
Since the matrix $I_n + X D^{-1}X^\top$ is high-dimensional, and the inversion cost $O(n^3)$ is still not trivial even with the Woodbury formula, we use GPU acceleration and conjugate gradient approach to solve the linear system at each step. To accelerate convergence toward the boundary, we augment the DC iterations with the acceleration step described in Section \ref{sec:acceleration} at every 100-th iteration.

Figure~\ref{fig:dorothea} summarizes the results after 5{,}000 iterations
(approximately 9 minutes on GPU).
The panel(b) shows the loss function decreasing monotonically and rapidly,
consistent with the descent guarantee of the DC algorithm. At convergence,
$91{,}568$ out of $91{,}598$ features ($99.97\%$) are driven to the upper boundary
$b = 10^4$, with the posterior mean $\hat\theta_j\approx 0$. 
This results in 30 features with $d_j < b$. Figure~\ref{fig:dorothea}(a) displays the estimated $d_j$ values for these features. Additionally, 4 features have $d_j$ values quite close to, but not exactly at, the upper bound $b$.

\begin{figure}[H]
  \centering
  \begin{subfigure}[b]{1\linewidth}
    \centering
    \includegraphics[width=\linewidth]{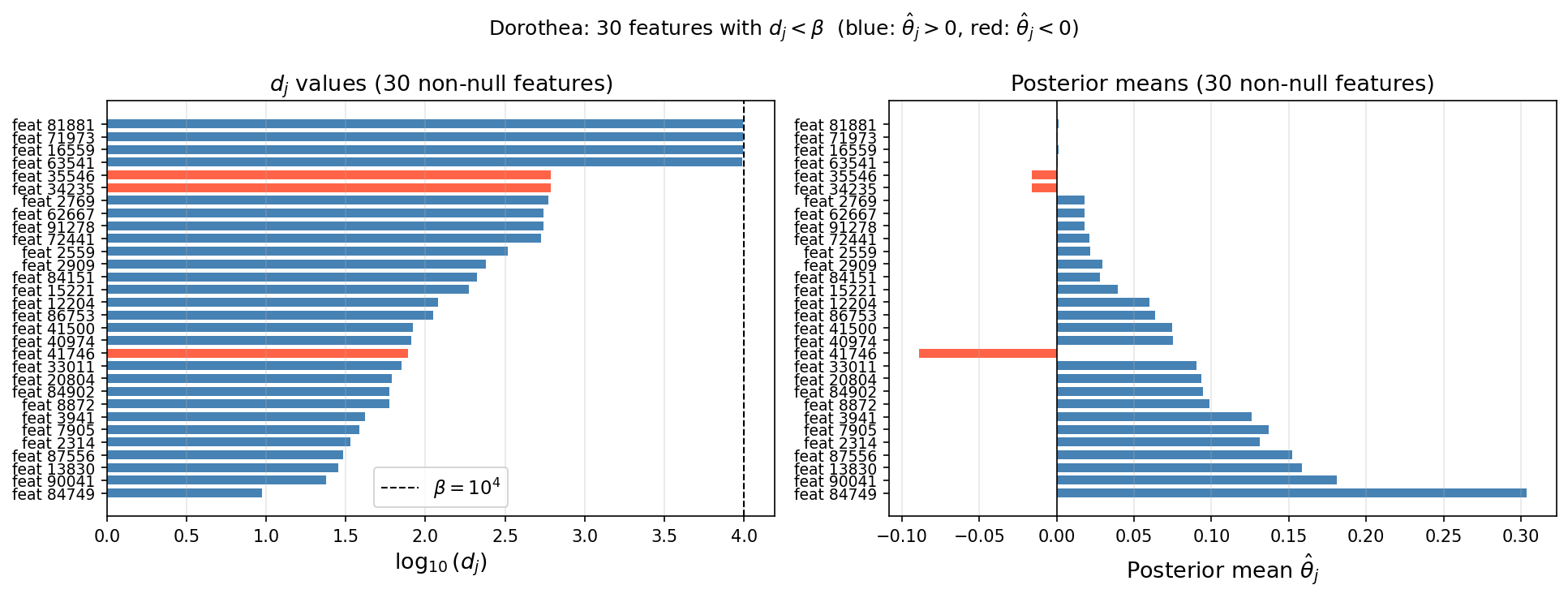}
    \caption{Estimated $d_j$ values for the 30 features with $d_j < b$.}
  \end{subfigure}\\
  \centering
  \begin{subfigure}[t]{0.45\linewidth}
    \centering
    \includegraphics[width=\linewidth,height=4cm]{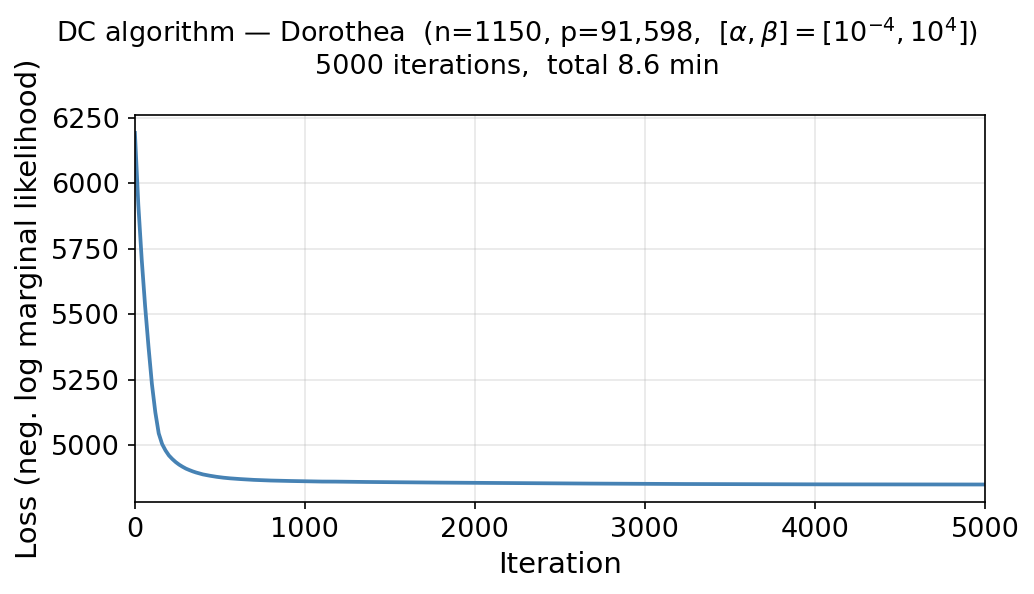}
    \caption{Loss (negative log marginal likelihood) versus DC iteration.}
  \end{subfigure}%
  \hfill
  \begin{subfigure}[t]{0.45\linewidth}
    \centering
    \includegraphics[width=\linewidth,height=4cm]{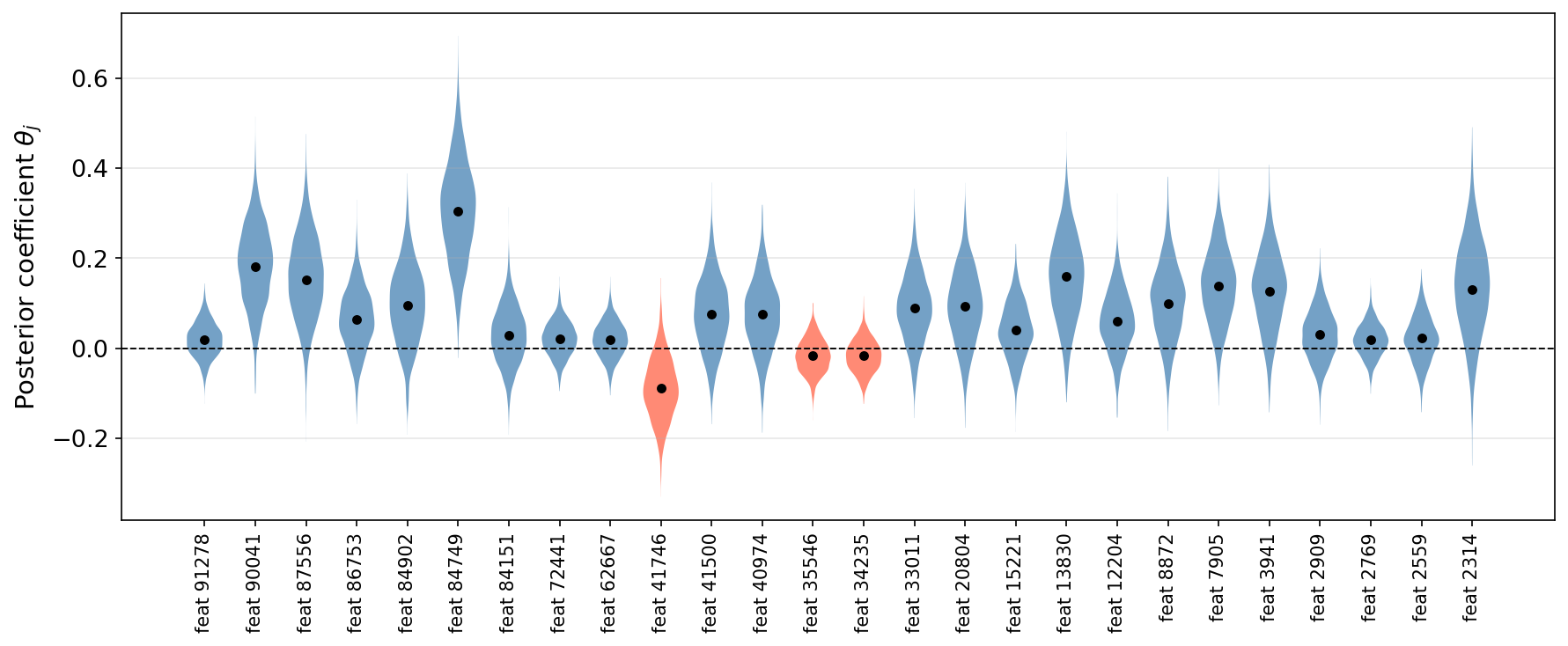}
    \caption{Posterior distributions of the 26 selected features (blue: positive mean, red: negative mean).}
  \end{subfigure}
  \caption{DC algorithm applied to the Dorothea dataset ($n=1{,}150$,
           $p=91{,}598$).}
  \label{fig:dorothea}
\end{figure}

The remaining $26$ features have $d_j \ll b$, corresponding to compounds with non-negligible association with the response.
The panel(c) shows the marginal posterior distributions of the selected coefficients. For comparison with earlier results \citep{lim2015learning}, the lasso selected 9 features, and the group lasso selected 93 features; thus, the DC algorithm yields a number of selected features comparable to these two methods, yet it incorporates the uncertainty of both the regression coefficients and the noise variance.

\section{Spatial data application for quantifying the aftershock risk after earthquakes}

We now apply the DC algorithm in fitting a structured shrinkage model for quantifying the aftershock activities following the 2019 Ridgecrest earthquake sequence in California (mainshock M7.1, epicenter at $35.77^\circ$N, $117.59^\circ$W). Aftershock catalogs are downloaded from the United States Geological Survey (USGS) ComCat API, retaining all events with magnitude $\geq 1.0$ within a $150\,\text{km}$ radius and $30$ days after the mainshock  on July 4-5, 2019. The study region is partitioned into a regular $5\,\text{km}\times 5\,\text{km}$ grid, yielding $n = 2,860$ cells. Each cell $i$ is characterized by its centroid coordinates  and its  distance to the fault plane, roughly matching a $50\,\text{km}$ surface rupture. The response $y_i$ is a log-one-plus transform of some weighted count of aftershocks assigned to cell $i$, which we model as
\(
    & y_i = \theta_i + \varepsilon_i, \quad \varepsilon_i \overset{\mathrm{iid}}{\sim} N(0,\sigma^2), \qquad \theta_i \sim \text{N}(0, d_i^{-1}),\\
    & d_j \geq d_i \quad \text{if } s_j > s_i.
\)
The model is not independent over spatial location $i$, due to the constraint that leads to stronger shrinkage of $\theta_i$ towards zero when $j$ is farther from fault plane than $i$, reflecting the physical expectation the {\em sparsity} of aftershock activities increases with distance. Note that the expectation on sparsity is a distributional assumption about the chance of having mean zero aftershocks, and is different from a simple monotonicity constraint ($\theta_j \leq \theta_i$ if $s_j \leq s_i$, which is arguably too restrictive).

Since the above constraint is an example of the radial monotonicity  discussed in Section~\ref{sec:structured}, with $s(i)$ as the scalar score, the projection algorithm can be applied in each DC step. The DC algorithm converges rapidly in 35 iterations, within 0.20 seconds on a 6-core CPU of a Macbook Pro.

\begin{figure}[H]
  \centering
  \begin{subfigure}[t]{0.48\linewidth}
    \centering
    \includegraphics[width=\linewidth]{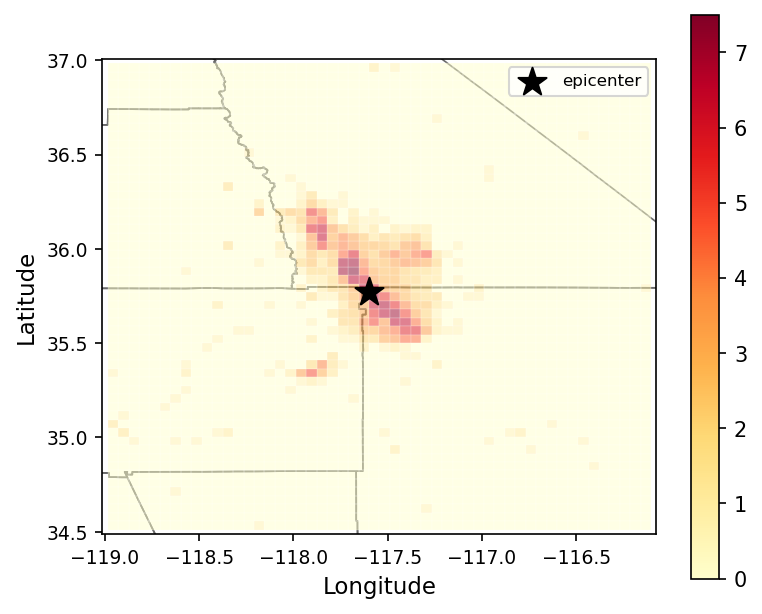}
    \caption{Raw observed aftershock activity scores.}
  \end{subfigure}
  \hfill
  \begin{subfigure}[t]{0.48\linewidth}
    \centering
    \includegraphics[width=\linewidth]{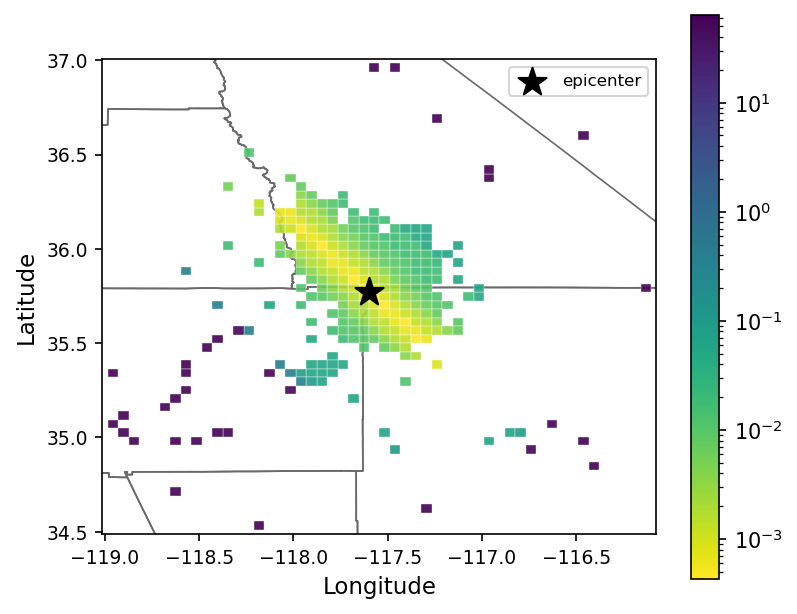}
    \caption{Estimated precision $d_i^\star$ over non-zero cells.}
  \end{subfigure}\\
  \begin{subfigure}[t]{0.48\linewidth}
    \centering
    \includegraphics[width=\linewidth]{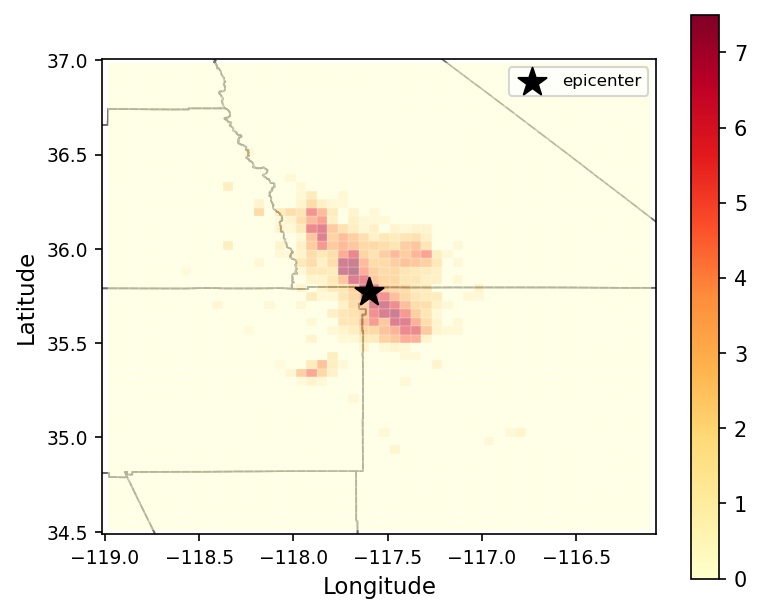}
    \caption{Posterior mean $\hat\theta_i$ with radial monotone shrinkage.}
  \end{subfigure}\hfill
  \begin{subfigure}[t]{0.48\linewidth}
    \centering
    \includegraphics[width=\linewidth]{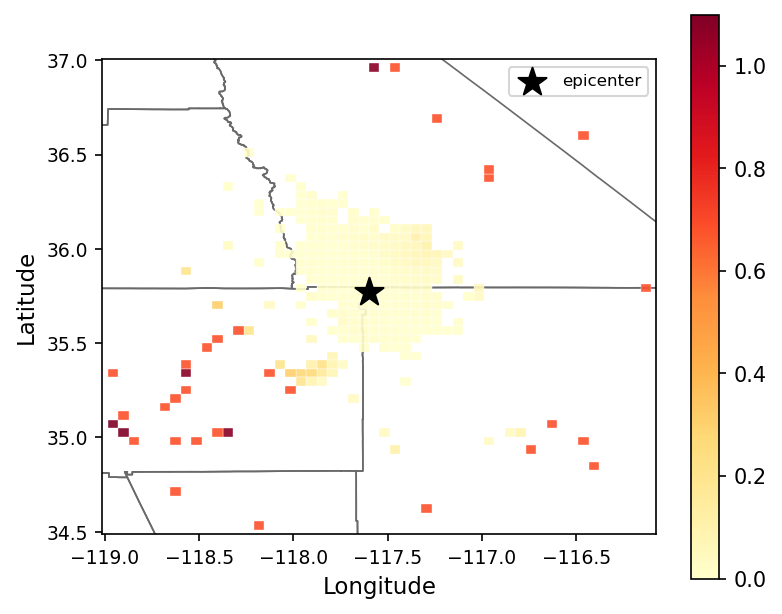}
    \caption{Absolute residuals $|y_i - \hat\theta_i|$.}
  \end{subfigure}
  \caption{DC spatial smoothing of the 2019 Ridgecrest aftershock sequence
    (M7.1, $n=2{,}860$ grid cells, $5\,\text{km}$ resolution, 30-day window).
    California county boundaries are shown in gray. The star marks the epicenter.}
  \label{fig:spatial}
\end{figure}

The raw surface $y_i$ (Figure~\ref{fig:spatial}(a)) reveals that activity is concentrated along the NW--SE fault rupture, with additional scattered aftershocks appearing at the periphery. The estimated $d_i^\star$ values over non-zero $y_i$ cells (Figure~\ref{fig:spatial}(b)) exhibit increasing shrinkage as the distance from the fault plane grows. Among all the cells, $138$ have $d_i^\star = b = 10^4$, corresponding to the strongest shrinkage on $\theta_i$; while most of the remaining $d_i^\star$ are less than $b$, there are $2,606$ cells with posterior mean $|\hat\theta_i| < 0.05$. The median $d_i^\star$ is $64.1$, reflecting an overall moderate degree of shrinkage for those cells with nonzero $\theta_i$. The DC posterior mean $\hat\theta_i$ (Figure~\ref{fig:spatial}(c)) is close to $y_i$ near the fault rupture, but decreases to zero as the distance increases. Figure~\ref{fig:spatial}(d) illustrates the distribution of absolute residuals $|y_i - \hat\theta_i|$.

\section{Discussion}
The optimality and rapid convergence guarantee of the DC algorithm makes it a compelling alternative or warm-start for MCMC algorithms. For example, for high-dimensional problems with large $n$ and $p$, the computation per each MCMC iteration can be intensive \citep{nishimura2023prior}, hence it is beneficial to use optimization to shorten the burn-in period. In this article, we focus on simple linear regression model due to the tractability of integrating out the regression coefficients and noise variance. For generalized linear models such as logistic or probit regression, one may use data augmentation to obtain a conditional normal distribution for the regression coefficients, hence allowing marginalization; however, the expectation of the function of the augmented variable (such as the quadratic form of the truncated normal variable in probit regression) is not fully tractable, hence some approximation such as Monte Carlo E-step is needed. This makes the optimization algorithm fundamentally more difficult to analyze theoretically. Alternatively, one could consider using a variational distribution or exploring alternative minorization steps in place of the Monte Carlo E-step, which would be an interesting direction for future research.

\noindent\textbf{Disclosure of the use of generative AI:} In preparing this paper, the author used
Claude Code to assist with code development and figure plotting. The author takes full
responsibility for all content presented.

\bibliographystyle{chicago}
\bibliography{ref}

\appendix

\section{Proof}

\subsection{Proof of Theorem \ref{thm:diverging}}
\begin{proof}
    Let
    $
    A:=X^\top X + D.
    $
    Since $X=[x_j,X_{-j}]$ and $D=\diag(d_j,D_{-j})$, we may write
    $$
    A=
    \begin{pmatrix}
    x_j^\top x_j+d_j & x_j^\top X_{-j} \\
    X_{-j}^\top x_j & X_{-j}^\top X_{-j}+D_{-j}
    \end{pmatrix}.
    $$
    By the Schur complement formula,
    $
    [A^{-1}]_{jj}
    =
    \{
    x_j^\top x_j+d_j
    -
    x_j^\top X_{-j}(X_{-j}^\top X_{-j}+D_{-j})^{-1}X_{-j}^\top x_j
    \}^{-1}=\{d_j+x_j^\top P_j(D_{-j})x_j\}^{-1}.
    $
    Similarly,
    $$
    [A^{-1}X^\top y]_j
    =
    \frac{
    x_j^\top y
    -
    x_j^\top X_{-j}(X_{-j}^\top X_{-j}+D_{-j})^{-1}X_{-j}^\top y
    }{
    d_j+x_j^\top P_j(D_{-j})x_j
    }
    =
    \frac{x_j^\top P_j(D_{-j})y}{d_j+x_j^\top P_j(D_{-j})x_j}.
    $$
    
    Differentiating the loss with respect to $d_j$ gives
    $$
    \frac{\partial \text{loss}(D)}{\partial d_j}
    =
    -\frac{1}{2d_j}
    +\frac12 [A^{-1}]_{jj}
    +
    \frac{a_0+n/2}{2\left\{b_0+\frac12\text{RSS}(D)\right\}}
    \Bigl([A^{-1}X^\top y]_j\Bigr)^2.
    $$
    Substituting the two identities above yields
    $$
    \frac{\partial \text{loss}(D)}{\partial d_j}
    =
    -\frac{x_j^\top P_j(D_{-j})x_j}
    {2\,d_j\bigl(d_j+x_j^\top P_j(D_{-j})x_j\bigr)}
    +
    \frac{a_0+n/2}
    {2\left\{
    b_0+\frac12\text{RSS}(D)
    \right\}}
    \cdot
    \frac{\bigl(x_j^\top P_j(D_{-j})y\bigr)^2}
    {\bigl(d_j+x_j^\top P_j(D_{-j})x_j\bigr)^2}.
    $$
    
    Next, using the Woodbury identity
    $
    I_n-X(X^\top X+D)^{-1}X^\top
    =
    \left(I_n+XD^{-1}X^\top\right)^{-1},
    $
    together with the partition $X=[x_j,X_{-j}]$, we obtain
    $$
    \text{RSS}(D)
    =
    y^\top P_j(D_{-j})y
    -
    \frac{\bigl(x_j^\top P_j(D_{-j})y\bigr)^2}
    {d_j+x_j^\top P_j(D_{-j})x_j}.
    $$
    Now on the inequality
    $$
    \frac{\partial \text{loss}(D)}{\partial d_j}<0,
    $$
    substituting the above expression for $\text{RSS}(D)$ and simplifying gives the equivalent condition
    $$
    \bigl(x_j^\top P_j(D_{-j})y\bigr)^2
    <
    \frac{
    \left\{
    b_0+\frac12\,y^\top P_j(D_{-j})y
    \right\}
    \,x_j^\top P_j(D_{-j})x_j\,
    \bigl(d_j+x_j^\top P_j(D_{-j})x_j\bigr)
    }{
    (a_0+n/2)\,d_j+\frac12\,x_j^\top P_j(D_{-j})x_j
    }.
    $$
    
    For fixed $D_{-j}$, define
    $$
    R(d_j)
    :=
    \frac{
    \left\{
    b_0+\frac12\,y^\top P_j(D_{-j})y
    \right\}
    \,x_j^\top P_j(D_{-j})x_j\,
    \bigl(d_j+x_j^\top P_j(D_{-j})x_j\bigr)
    }{
    (a_0+n/2)\,d_j+\frac12\,x_j^\top P_j(D_{-j})x_j
    }.
    $$
    A direct differentiation shows
    $$
    R'(d_j)
    =
    \frac{
    \left\{
    b_0+\frac12\,y^\top P_j(D_{-j})y
    \right\}
    \bigl(x_j^\top P_j(D_{-j})x_j\bigr)^2
    \left(\frac12-\left(a_0+\frac n2\right)\right)
    }{
    \left\{
    (a_0+n/2)\,d_j+\frac12\,x_j^\top P_j(D_{-j})x_j
    \right\}^2
    }
    <0,
    $$
    since $a_0+n/2>\frac12$. Hence $R(d_j)$ is strictly decreasing in $d_j>0$, and therefore
    $$
    \inf_{d_j>0}R(d_j)
    =
    \lim_{d_j\to\infty}R(d_j)
    =
    \frac{
    \left\{
    b_0+\frac12\,y^\top P_j(D_{-j})y
    \right\}
    \,x_j^\top P_j(D_{-j})x_j
    }{
    a_0+n/2
    }.
    $$
    It follows that if
    $$
    \bigl(x_j^\top P_j(D_{-j})y\bigr)^2
    <
    \frac{
    \left\{
    b_0+\frac12\,y^\top P_j(D_{-j})y
    \right\}
    \,x_j^\top P_j(D_{-j})x_j
    }{
    a_0+n/2
    },
    $$
    then 
    $
    {\partial \text{loss}(D)}/{\partial d_j}<0
    \qquad
    \text{for all } d_j>0.
    $
    \end{proof}

\subsection{Proof of Theorem \ref{thm:global_nonunique}}

\begin{proof}
     In this proof, we use $Q=X^\top X$ and $a=X^\top y$, $c=a_0+\frac n2$ to simplify the notation, and
    $
    r(D)= b_0+\frac12\Bigl(y^\top y-a^\top(Q+D)^{-1}a\Bigr).
    $

    Both $\tilde g$ and $h$ are closed, convex, and continuously differentiable on $\mathbb D$.
    The constraint set $\mathcal D$ is bounded, closed, and convex with
    $\mathcal D\subseteq \dom \tilde g \cap \dom h$.
    Since $\tilde g$ is strictly convex, the DCA subproblem \eqref{eq:dca_subproblem}
    is convex and has a unique solution for any $D^{(t)}\in\mathcal D$. Since $Q+D\succeq a I$ for all $D\in\mathcal D$, the function $h(D)$ is finite and continuous on $\mathcal D$.
    Together with the continuity and lower boundedness of $\tilde g$ on $\mathcal D$,
    the loss function $f=\tilde g-h$ is continuous on the compact set $\mathcal D$.
    Hence $f$ attains its minimum on $\mathcal D$, so $\mathcal S\neq\varnothing$ and
    $f^\star>-\infty$.

    We now show the existence of $\mu>0$ such that
    \(
    \mu\bigl(f(D)-f^\star\bigr)
    \le
    d_{h^*}\!\bigl(\nabla \tilde g(D)+U,\nabla h(D)\bigr),
    \qquad
    \forall\, D\in\mathcal D,\;
    U\in N_{\mathcal D}(D),\;
    \|U\|_F\le r,
    \)
    where $d_{h^*}(U,V)$ is the Bregman divergence of $h^*$ at $V$ with respect to $U$; $h^*$ is the convex conjugate of $h$.

    Since $\mathcal D$ is compact and $d_j\ge a>0$ on $\mathcal D$,
    the Hessian of $\tilde g$ is bounded above on $\mathcal D$.
    Also, since $Q+D\succeq a I$ for all $D\in\mathcal D$ and $r(D)\ge b_0>0$,
    the Hessian of $h$ is bounded above on $\mathcal D$.
    Therefore, $h$ is $L_h$-smooth on $\mathcal D$ for some $L_h>0$.

    Since $h$ is convex and $L_h$-smooth, its convex conjugate $h^*$ is
    $L_h^{-1}$-strongly convex. 
    \(
    h^*(V_1) \ge h^*(V_2)+ \langle V_1-V_2, \nabla h^*(V_2)\rangle + \frac{1}{2L_h}\|V_1-V_2\|_F^2
    \)
    
    Equivalently, the Bregman divergence $d_{h^*}(V_1,V_2)$
    \begin{equation}\label{eq:bregman_lower_global}
    d_{h^*}(V_1,V_2)\ge \frac{1}{2L_h}\|V_1-V_2\|_F^2
    \qquad
    \forall\, V_1,V_2.
    \end{equation}
    Applying \eqref{eq:bregman_lower_global} with
    $
    V_1=\nabla \tilde g(D)+U,
    \;
    V_2=\nabla h(D),
    $
    yields
    \begin{equation}\label{eq:bregman_residual_global}
    d_{h^*}\!\bigl(\nabla \tilde g(D)+U,\nabla h(D)\bigr)
    \ge
    \frac{1}{2L_h}\|\nabla \tilde g(D)+U-\nabla h(D)\|_F^2.
    \end{equation}

    \noindent\textbf{1. Neighborhood of the solution set:}
    We first consider the neighborhood of the solution set
    \(
    \mathcal N_\rho
    :=
    \bigl\{
    D\in\mathcal D:\ \operatorname{dist}(D,\mathcal S)\le \rho
    \bigr\}.
    \)
    Let $D\in\mathcal N_\rho$ and let $U\in N_{\mathcal D}(D)$ satisfy $\|U\|_F\le r$.
    By assumptions (A1) and (A2),
    \(
    \|\nabla \tilde g(D)+U-\nabla h(D)\|_F^2
    \ge
    m^2\,\operatorname{dist}(D,\mathcal S)^2
    \ge
    \frac{m^2}{M}\bigl(f(D)-f^\star\bigr).
    \)
    Combining this with \eqref{eq:bregman_residual_global} gives
    \begin{equation}\label{eq:dcpl_local_set}
    d_{h^*}\!\bigl(\nabla \tilde g(D)+U,\nabla h(D)\bigr)
    \ge
    \frac{m^2}{2L_h M}\bigl(f(D)-f^\star\bigr),
    \qquad
    \forall\, D\in\mathcal N_\rho,\;
    U\in N_{\mathcal D}(D),\;
    \|U\|_F\le r.
    \end{equation}

    \noindent\textbf{2. Neighborhood away from the solution set:} It remains to control the region away from $\mathcal S$.
    Define
    \(
    K
    :=
    \bigl\{
    D\in\mathcal D:\ \operatorname{dist}(D,\mathcal S)\ge \rho
    \bigr\}.
    \)
    If $K=\varnothing$, then \eqref{eq:dcpl_local_set} already proves the claim globally.
    Assume now that $K\neq\varnothing$.

    Consider the set
    \(
    \Xi
    :=
    \bigl\{
    (D,U):\ D\in K,\ U\in N_{\mathcal D}(D),\ \|U\|_F\le r
    \bigr\}.
    \)

    Because $\mathcal D$ is compact, $\|U\|_F\le r$, and the graph of the normal cone mapping
    $N_{\mathcal D}$ is closed, the set $\Xi$ is compact.
    The map
    \(
    (D,U)\mapsto \|\nabla \tilde g(D)+U-\nabla h(D)\|_F
    \)
    is continuous on $\Xi$, so it attains its minimum there.
    By the assumption (A3),
    this minimum cannot be zero on $K$. Hence there exists $\delta_\Xi>0$ such that
    $
    \|\nabla \tilde g(D)+U-\nabla h(D)\|_F
    \ge
    \delta_\Xi,
    \;
    \forall\, (D,U)\in\Xi.
    $
    Also, since $f$ is continuous on the compact set $K$,
    $
    G_\Xi:=\max_{D:(D,U)\in\Xi}\bigl(f(D)-f^\star\bigr)<\infty.
    $
    Therefore, for every $(D,U)\in\Xi$,
    $
    \|\nabla \tilde g(D)+U-\nabla h(D)\|_F^2
    \ge
    \delta_\Xi^2
    \ge
    \frac{\delta_\Xi^2}{G_\Xi}\bigl(f(D)-f^\star\bigr).
    $
    Combining this with \eqref{eq:bregman_residual_global}, we get
    \begin{equation}\label{eq:dcpl_away_set}
    d_{h^*}\!\bigl(\nabla \tilde g(D)+U,\nabla h(D)\bigr)
    \ge
    \frac{\delta_\Xi^2}{2L_h G_\Xi}\bigl(f(D)-f^\star\bigr),
    \qquad
    \forall\, D\in K,\;
    U\in N_{\mathcal D}(D),\;
    \|U\|_F\le r.
    \end{equation}

    Finally, define
    \[
    \mu
    :=
    \min\!\left\{
    \frac{m^2}{2L_h M},
    \frac{\delta_\Xi^2}{2L_h G_\Xi}
    \right\}>0,
    \]
    with the convention that the second term is omitted if $K=\varnothing$.
    Then \eqref{eq:dcpl_local_set} and \eqref{eq:dcpl_away_set} together imply
    $
    \mu\bigl(f(D)-f^\star\bigr)
    \le
    d_{h^*}\!\bigl(\nabla \tilde g(D)+U,\nabla h(D)\bigr),
    \;
    \forall\, D\in\mathcal D,\;
    U\in N_{\mathcal D}(D),\;
    \|U\|_F\le r.
    $

    By first-order optimality of the DC subproblem, there exists
    $U^{(t+1)}\in N_{\mathcal D}(D^{(t+1)})$ such that
    \(
    \nabla \tilde g(D^{(t+1)}) + U^{(t+1)} = \nabla h(D^{(t)}).
    \)
    For $t\ge t_0$, applying the inequality at $D=D^{(t+1)}$ and $U=U^{(t+1)}$ gives
    \(
    \mu\bigl(f(D^{(t+1)})-f^\star\bigr)
    \le
    d_{h^*}\!\bigl(\nabla \tilde g(D^{(t+1)})+U^{(t+1)},\nabla h(D^{(t+1)})\bigr)
    =
    d_{h^*}\!\bigl(\nabla h(D^{(t)}),\nabla h(D^{(t+1)})\bigr).
    \)
    By the standard conjugacy identity for Bregman divergences,
    \(
    & d_{h^*}\!\bigl(\nabla h(D^{(t)}),\nabla h(D^{(t+1)})\bigr)
    =
    d_h(D^{(t+1)},D^{(t)})\\
      &=
      h(D^{(t+1)})-h(D^{(t)})
      -\langle \nabla h(D^{(t)}),D^{(t+1)}-D^{(t)}\rangle \\
      &=
      f(D^{(t)})-f(D^{(t+1)})
      -
      \Bigl(
      \tilde g(D^{(t)})-\tilde g(D^{(t+1)})
      -\langle \nabla h(D^{(t)}),D^{(t)}-D^{(t+1)}\rangle
      \Bigr).
\)
      Note that
  \begin{align*}
  &\tilde g(D^{(t)})-\tilde g(D^{(t+1)})
  -\langle \nabla h(D^{(t)}),D^{(t)}-D^{(t+1)}\rangle \\
  &\qquad=
  \tilde g(D^{(t)})-\tilde g(D^{(t+1)})
  -\langle \nabla g(D^{(t+1)})+U^{(t+1)},D^{(t)}-D^{(t+1)}\rangle \\
  &\qquad=
  d_{\tilde g}(D^{(t)},D^{(t+1)})
  -
  \langle U^{(t+1)},D^{(t)}-D^{(t+1)}\rangle \\
  &\qquad\ge
  d_{\tilde g}(D^{(t)},D^{(t+1)}) \ge 0, 
  \end{align*}
  since $U^{(t+1)}\in N_{\mathcal D}(D^{(t+1)})$ and $D^{(t)}\in\mathcal D$ imply
  $
  \langle U^{(t+1)},D^{(t)}-D^{(t+1)}\rangle\le 0.
  $
  Therefore,
  \begin{align*}
      d_h(D^{(t+1)},D^{(t)})
      &\le
      f(D^{(t)})-f(D^{(t+1)}).
  \end{align*}
  
    Therefore,
    $
    \mu\bigl(f(D^{(t+1)})-f^\star\bigr)
    \le
    f(D^{(t)})-f(D^{(t+1)}).
    $
    Rearranging yields
    \(
    f(D^{(t+1)})-f^\star
    \le
    \frac{1}{1+\mu}\bigl(f(D^{(t)})-f^\star\bigr).
    \)
    This proves the claimed linear convergence rate in loss function value.

    We see that as $r$ decreases or $\rho$ increases, $\Xi$ becomes smaller, hence $\delta^2_\Xi/G_\Xi$ becomes larger.
\end{proof}

\begin{remark}
  Our proof is inspired by a recent work of \cite{yao2023globally}. There are two fundamental differences between the proof in \cite{yao2023globally} and ours: (1) the lower bound of a Bregman divergence in Theorem 1 of \cite{yao2023globally} was stated as an assumption, we formally prove it for our problem;
  (2) the proof of Theorem 1 in \cite{yao2023globally} further assumes the bound constant $r$ can increase arbitrarily, we do not make this assumption but make $r$ an upper bound of the norm of the DC multipliers after sufficient iterations.
\end{remark}

\end{document}